\newtheorem{theorem}{Theorem}[section]
\newtheorem{claim}[theorem]{Claim}
\newtheorem{corol}[theorem]{Corollary}
 \newcommand{\qedsymb}{\hfill{\rule{2mm}{2mm}}}  
 \newenvironment{proof}[1][]{\begin{trivlist}  
 \item[\hspace{\labelsep}{\bf\noindent Proof#1:\/}] 
 }{\qedsymb\end{trivlist}}
\newcommand{\ignore}[1]{}
\newcommand{\QMA}{\mathsf{QMA}}
\newcommand{\NP}{\mathsf{NP}}
\newcommand{\FIX}{\mathsf{FIX}}
\newcommand{\ket}[1]{{ |{#1} \rangle }}  
\newcommand{\bra}[1]{{ \langle {#1} | }}
\newcommand{\poly}{\mathrm{poly}} 
\newcommand{\EqDef}{\stackrel{\mathrm{def}}{=}}
\newcommand{\Tr}{\mathrm{Tr}}
\newcommand{\pr}{\text{Pr}}
\newcommand{\Eq}[1]{Eq.~(\ref{#1})}
\newcommand{\Cor}[1]{Corollary~\ref{#1}}
\newcommand{\Ref}[1]{Ref.~\cite{#1}}
\newcommand{\Thm}[1]{Theorem~\ref{#1}}
\newcommand{\App}[1]{Appendix~\ref{#1}}
\newcommand{\Id}{\mathbb{1}}
\newcommand{\bs}{\bar{s}}
\mathchardef\mhyphen="2D %hyphen used in math mode
\newcommand{\qsat}{\textsc{qsat}}
\newcommand{\cqsat}{\textsc{commuting qsat}}
\newcommand{\kqsat}{k\mhyphen\textsc{qsat}}
\newcommand{\sat}{\textsc{sat}}
\newcommand{\ksat}{k\mhyphen\textsc{sat}}
\newcommand{\clh}{\textsc{commuting local hamiltonian}}
\begin{document}
\author{Or Sattath\footnote{School of Computer Science and Engineering, The Hebrew University, Israel.} \and Itai Arad\footnote{Centre for Quantum Technologies, National University of Singapore, Singapore.}}

\title{A Constructive Quantum Lov\'asz Local Lemma for Commuting
  Projectors} 

\maketitle

\begin{abstract}
  The Quantum Satisfiability problem generalizes the Boolean
  satisfiability problem to the quantum setting by replacing
  classical clauses with local projectors. The Quantum Lov\'asz
  Local Lemma gives a sufficient condition for a Quantum
  Satisfiability problem to be
  satisfiable~\cite{ambainis2012quantum}, by generalizing the
  classical Lov\'asz Local Lemma.

  The next natural question that arises is: can a satisfying quantum
  state be \emph{efficiently} found, when these conditions hold?
  In this work we present such an algorithm, with the additional
  requirement that all the projectors commute. The proof follows
  the information theoretic proof given by Moser's breakthrough
  result in the classical setting~\cite{moser2009constructiveSTOC}.

  Similar results were independently published
  in~\cite{cubitt2011constructive, cubitt2013quantum}.
\end{abstract}

%%%%%%%%%%%%%%%%%%%%%%%%%%%%%%%%%%%%%%%%%%%%%%%%%%%%%%%%
\section{Introduction and main results} 

Given a set of independent events, $B_{1},\ldots,B_{m}$ such that
$\pr(B_{i}) < 1$, the probability that none of the events happen is
strictly positive: \[ \pr(\bigwedge_{i=1}^{m} \overline{B_{i}}) =
\prod_{i=1}^{m}( 1-\pr(B_{i})) > 0.\] What if the events are not
independent? The Lov\'asz Local Lemma (LLL) provides a definition of
``weakly dependent'' events, and is an important tool to guarantee
that the probability that none of the events happens is strictly
positive.  
\begin{theorem}[\cite{erdos1975problems}, 
  see also Ref.{~\cite[Chapter 5]{alon2004probabilistic}}] 
\label{th:symmetric_lll} 
  Let $B_1, B_2,\ldots, B_m$ be events with $\pr(B_i)\leq p$ and
  such that each event is mutually independent of all but $g-1$
  events.  If $p \cdot e \cdot g\le 1$ then
  $\pr(\bigwedge_{i=1}^m \overline{B_i})>0$.
\end{theorem}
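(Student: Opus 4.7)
The plan is to prove a slightly more general asymmetric LLL and derive the symmetric statement as a corollary. Associate to each event $B_i$ a weight $x_i\in(0,1)$, and prove that whenever $\pr(B_i) \leq x_i \prod_{j\in \Gamma(i)}(1-x_j)$---where $\Gamma(i)$ is the set of indices $j\neq i$ for which $B_i$ is not mutually independent of $B_j$---one has $\pr(\bigwedge_{i=1}^m \overline{B_i}) \geq \prod_{i=1}^m (1-x_i) > 0$. To recover the symmetric form in \Thm{th:symmetric_lll}, set $x_i = 1/g$ uniformly. Since $|\Gamma(i)|\leq g-1$, the hypothesis becomes $p \leq \tfrac{1}{g}(1-1/g)^{g-1}$, which is implied by $p\cdot e\cdot g \leq 1$ through the elementary inequality $(1-1/g)^{g-1}\geq 1/e$.

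The heart of the argument is the inductive claim that, for every index $i$ and every $S\subseteq[m]$ with $i\notin S$,
\[
\pr\!\left(B_i \;\Big|\; \bigwedge_{j\in S}\overline{B_j}\right) \;\leq\; x_i.
\]
I would prove this by induction on $|S|$. The base case $S=\emptyset$ is immediate from the hypothesis, since the product $\prod_{j\in\Gamma(i)}(1-x_j)\leq 1$. For the inductive step, partition $S$ as $S_1 = S\cap \Gamma(i)$ and $S_2 = S\setminus S_1$, and write the conditional probability as the ratio
\[
\frac{\pr\!\left(B_i \wedge \bigwedge_{j\in S_1}\overline{B_j} \;\Big|\; \bigwedge_{j\in S_2}\overline{B_j}\right)}{\pr\!\left(\bigwedge_{j\in S_1}\overline{B_j} \;\Big|\; \bigwedge_{j\in S_2}\overline{B_j}\right)}.
\]
The numerator is bounded above by $\pr(B_i \mid \bigwedge_{j\in S_2}\overline{B_j})$, which equals $\pr(B_i) \leq x_i \prod_{j\in \Gamma(i)}(1-x_j)$, since $B_i$ is mutually independent of the events indexed by $S_2$. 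The denominator is bounded below by $\prod_{j\in S_1}(1-x_j)$: peel off one event of $S_1$ at a time via the chain rule and apply the induction hypothesis at each step (the conditioning set strictly shrinks, so the hypothesis applies). Cancelling the factors in $S_1$ against those in the numerator leaves exactly $x_i$.

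The theorem then follows by one final application of the chain rule,
\[
\pr\!\left(\bigwedge_{i=1}^m \overline{B_i}\right) = \prod_{i=1}^m \pr\!\left(\overline{B_i}\;\Big|\; \bigwedge_{j<i}\overline{B_j}\right) \geq \prod_{i=1}^m (1-x_i) > 0.
\]
The main subtlety I expect is the bookkeeping in the denominator: one must verify that mutual (not merely pairwise) independence of $B_i$ from $\{B_j\}_{j\in S_2}$ is what makes the numerator step collapse to $\pr(B_i)$, and that the peeling order in $S_1$ is compatible with the inductive hypothesis. Beyond this, the specialization to the symmetric setting is a routine calculus exercise driven by the inequality $(1-1/g)^{g-1}\geq 1/e$.
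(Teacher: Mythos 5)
The paper does not prove Theorem~\ref{th:symmetric_lll}; it cites it as a known result of Erd\H{o}s and Lov\'asz, pointing to Alon--Spencer for a proof. Your proof is the standard one (essentially Spencer's, as presented in the cited textbook): prove the asymmetric version by inducting on the conditioning set to establish the key bound $\pr(B_i \mid \bigwedge_{j\in S}\overline{B_j}) \le x_i$, then apply the chain rule, and finally specialize with $x_i = 1/g$ and $(1-1/g)^{g-1}\ge 1/e$. The decomposition into $S_1 = S\cap\Gamma(i)$ and $S_2 = S\setminus\Gamma(i)$, the use of mutual independence to collapse the numerator to $\pr(B_i)$, and the chain-rule peeling in the denominator are all correct, and your identification of the subtle points (that mutual rather than pairwise independence is essential, and that the peeling order must keep the conditioning set strictly smaller) is exactly right.

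One small point you leave tacit: the conditional probabilities you manipulate must be well-defined, i.e.\ $\pr(\bigwedge_{j\in S}\overline{B_j})>0$ for every $S$ encountered. This is typically folded into the same induction---the chain-rule expansion together with $\pr(\overline{B_{j_a}}\mid\cdots)\ge 1-x_{j_a}>0$ shows each such probability is strictly positive---so it is worth stating explicitly that the two inductions run in tandem (or that one proves positivity first by a separate but identical induction). Also, the phrase ``the hypothesis becomes $p\le\frac{1}{g}(1-1/g)^{g-1}$'' should really read ``the hypothesis is implied by,'' since $|\Gamma(i)|$ may be strictly less than $g-1$ and the exact condition is $p\le \frac{1}{g}(1-1/g)^{|\Gamma(i)|}$, a weaker requirement. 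Neither of these affects correctness.
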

One of the many uses of the LLL is providing a sufficient condition
for the satisfiability of a $\ksat$ formula.
\begin{corol}[\cite{kratochvil1993one}]
\label{cor:lll_for_clauses} 
  Let $\Phi$ be a $\ksat$ formula in CNF-form in which each clause
  shares variables with at most $g-1$ other clauses.  Then if $g\le
  \frac{2^k}{e}$, $\Phi$ is satisfiable.
\end{corol}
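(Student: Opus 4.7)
The plan is to deduce this directly from the symmetric LLL (Theorem~\ref{th:symmetric_lll}) applied to the natural probabilistic experiment on the Boolean hypercube, exactly as in the original Erd\H{o}s--Lov\'asz application.

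First I would set up the probability space: sample a uniformly random assignment to the variables of $\Phi$, i.e., set each variable independently to true or false with probability $1/2$. For each clause $C_i$ of $\Phi$, let $B_i$ be the event that the assignment fails to satisfy $C_i$. Because $C_i$ is a disjunction of $k$ literals on distinct variables, there is exactly one assignment to those $k$ variables that falsifies $C_i$, so $\pr(B_i) = 2^{-k}$. Set $p \EqDef 2^{-k}$.

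Next I would verify the mutual-independence hypothesis. The event $B_i$ depends only on the values of the $k$ variables appearing in $C_i$. Hence $B_i$ is determined by this block of coordinates, and is mutually independent of the $\sigma$-algebra generated by any collection of $B_j$'s whose clauses $C_j$ are disjoint from $C_i$ in their variable sets. By hypothesis, at most $g-1$ clauses share a variable with $C_i$, so each $B_i$ is mutually independent of all but $g-1$ of the other events, matching the dependency parameter in Theorem~\ref{th:symmetric_lll}.

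Finally I would check the numerical condition. With the chosen $p = 2^{-k}$, the hypothesis $p\cdot e\cdot g \le 1$ of the symmetric LLL becomes $e\cdot g \cdot 2^{-k} \le 1$, i.e.\ $g \le 2^k/e$, which is exactly the assumption of the corollary. Therefore Theorem~\ref{th:symmetric_lll} yields $\pr(\bigwedge_{i=1}^m \overline{B_i}) > 0$, so some assignment in the support of the uniform distribution avoids all the bad events $B_i$, i.e.\ satisfies every clause of $\Phi$, proving $\Phi$ is satisfiable. There is essentially no obstacle here: the only ``content'' is identifying the right probability space and the right dependency graph (the clause-sharing graph); everything else is a direct substitution into Theorem~\ref{th:symmetric_lll}.
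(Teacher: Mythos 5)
Your proof is correct and takes exactly the same route as the paper: sample a uniformly random assignment, let $B_i$ be the event that clause $i$ is violated (so $p = 2^{-k}$), observe each $B_i$ is mutually independent of all but the at most $g-1$ events sharing a variable, and plug into Theorem~\ref{th:symmetric_lll}.
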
 
The corollary follows from Thm.~\ref{th:symmetric_lll} by letting
$B_i$ be the event that the $i$-th clause is not satisfied for a
random assignment, which happens with probability $p=2^{-k}$, and
noting that by definition, every such event is independent of all but
at most $g-1$ other events.

In \Ref{ambainis2012quantum}, a quantum version of
\Thm{th:symmetric_lll} was proved, known as the \emph{Quantum
Lov\'asz Local Lemma (QLLL)}. In the quantum settings, one replaces
the probability space by the quantum Hilbert space, events by
subspaces and their probabilities by the ``relative dimension'',
which is the ratio of the subspace dimension to the total dimension
of the Hilbert space. The formal statement of this theorem, as well
as the exact definitions, are not relevant for this work, and
therefore will be omitted here; these details can be found in 
\Ref{ambainis2012quantum}.

The QLLL also has a simple corollary, which is equivalent to
Corollary~\ref{cor:lll_for_clauses}. To state it, we first need to
define the The \emph{quantum satisfiability problem}, $\kqsat$,
which is the quantum analog of the classical $\ksat$ problem.
Introduced by Bravyi in~\Ref{bravyi2006efficient}, $\kqsat$ is the
following promise problem. We are given a set of projectors
$\{\Pi_{1}, \ldots, \Pi_{m}\}$ that are defined on the Hilbert space
of $n$ qubits. Each projector $\Pi_i$ acts non-trivially on at most
$k$ qubits, which means that it can be written as $\Pi_i =
\hat{\Pi}_i \otimes \Id_{n-k}$, where $\hat{\Pi}_i$ is a projection
defined in the Hilbert space of $k$ qubits, and $\Id_{n-k}$ is the
identity operator on the rest of the qubits. In addition, we are
given a parameter $\epsilon = \Omega(1/\poly(n))$, as well as a
promise that either there exists a quantum state $\ket{\psi}$ in the
intersection of the null spaces of all the projectors (in which
case we say that the instance is \emph{satisfiable}), or that every
state $\ket{\psi}$ satisfies
$\sum_{i=1}^{m}\bra{\psi}\Pi_{i}\ket{\psi} \geq \epsilon$.  Our goal
is to decide which of these possibilities holds. We note that just
as in the classical case, where $\ksat$ is 
$\NP$-complete for $k \geq 3$, $\kqsat$ is 
\emph{quantum} $\NP$-complete
for $k \geq 3$~\cite{gosset2013quantum}.\footnote{More precisely, it is
complete for the class $\QMA_1$, which is
the usual quantum $\NP$ class $\QMA$, but with a one-sided error
(see \Ref{bravyi2006efficient} for details).}

Next, we say that the \emph{neighborhood} of a projector $\Pi_{i}$,
denoted $\Gamma^{+}(\Pi_{i})$, is the set of projectors that act
nontrivially on at least one of the qubits on which $\Pi_{i}$ acts 
nontrivially (note that $\Pi_{i} \in \Gamma^{+}(\Pi_{i}))$.  We say
that the problem is a ($k$-local, rank-$r$,
$g$-neighborhood)-$\qsat$ if it is a $\kqsat$ problem with the
additional properties that each projector is of rank at most $r$,
and $|\Gamma^+(\Pi_i)|\le g$ for every projector $\Pi_{i}$. In terms
of these definitions, we have the following corollary of the QLLL:
\begin{corol}[\cite{ambainis2012quantum}]
  \label{cor:qsat_satisfiable_clauses} Let $\{\Pi_1,\ldots,\Pi_m\}$
  be a ($k$-local, rank-$r$, $g$-neighborhood)-$\qsat$ instance. If
  $g\le\frac{2^k}{re}$, the instance is satisfiable.
\end{corol}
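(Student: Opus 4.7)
The plan is to derive the corollary from the symmetric Quantum Lov\'asz Local Lemma of~\cite{ambainis2012quantum} in exactly the way Corollary~\ref{cor:lll_for_clauses} is derived from Theorem~\ref{th:symmetric_lll}, with ``relative dimension'' playing the role of probability.

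First I would associate to each projector $\Pi_i$ the subspace $V_i$ onto which it projects. Since $\Pi_i = \hat\Pi_i \otimes \Id_{n-k}$ with $\hat\Pi_i$ of rank at most $r$ on $k$ qubits, the image of $\Pi_i$ on the full $n$-qubit space has dimension at most $r\cdot 2^{n-k}$, so its relative dimension $\dim(V_i)/2^n$ is bounded by $p\EqDef r/2^k$. These subspaces play the role of the ``bad events'' $B_i$ in the QLLL.

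Next I would verify the dependency-graph hypothesis of the QLLL. The QLLL requires that each event-subspace $V_i$ be ``R-independent'' of all but at most $g-1$ of the others, where R-independence is defined in~\cite{ambainis2012quantum} in terms of factorization of the Hilbert space. Two projectors $\Pi_i,\Pi_j$ that act nontrivially on disjoint sets of qubits have the property that both $V_i$ and $V_j$ factorize across the partition into their respective supports and the rest, which is exactly the structural condition used in the QLLL to deduce R-independence. Hence each $V_i$ is R-independent of all $V_j$ with $\Pi_j\notin\Gamma^+(\Pi_i)$, and the hypothesis $|\Gamma^+(\Pi_i)|\le g$ gives a valid dependency graph of degree at most $g-1$.

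Finally I would plug into the symmetric QLLL: the condition $p\cdot e\cdot g\le 1$ becomes $(r/2^k)\cdot e\cdot g\le 1$, i.e.\ $g\le 2^k/(re)$, which is precisely the hypothesis of the corollary. The QLLL then guarantees that $\bigcap_i V_i^\perp\neq\{0\}$, and any unit vector in this intersection is a state annihilated by every $\Pi_i$, i.e.\ a satisfying assignment for the \qsat{} instance. The only delicate point, and the part I would be most careful about, is checking that ``acts on disjoint qubits'' really gives R-independence in the precise sense required by the QLLL of~\cite{ambainis2012quantum}; everything else is bookkeeping.
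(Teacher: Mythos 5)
Your derivation is correct and is exactly the reduction the paper intends (and attributes to~\cite{ambainis2012quantum}): bad events become subspaces of relative dimension at most $r/2^k$, disjointly supported projectors give R-independence, and the symmetric QLLL condition $p\cdot e\cdot g\le 1$ specializes to $g\le 2^k/(re)$. The paper states this corollary by citation without spelling out the proof, but the mapping you describe is precisely what its surrounding discussion (probabilities $\to$ relative dimensions, one-out-of-$2^k$ configurations $\to$ $r$-out-of-$2^k$ dimensions) alludes to.
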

A rank-1 $\kqsat$ instance is similar to a $\ksat$ instance in the
following sense: in a $k$-CNF formula, each clause excludes one out
of $2^{k}$ configurations of the relevant variables; in a rank-1
$\kqsat$, each projector excludes one dimension out of the $2^{k}$ relevant
dimensions. The parameters of \Cor{cor:qsat_satisfiable_clauses} and
\Cor{cor:lll_for_clauses} coincide in the case of rank-1 instances.
We keep the generality of rank-r instances for reasons that are
related to commuting instances, to be clarified
shortly.

%We first generalize the definition of mutual independence.
%\begin{definition}
%\label{def:mutual} An event $A$ (resp. subspace $X$) is mutually independent (resp. mutually R-independent)
%of a set of events (resp. subspaces) $\{Y_1, \ldots, Y_\ell\}$ if for all $ S \subseteq [\ell]$,
%$\Pr(A|\bigwedge_{i \in S} Y_i)=\Pr(A)$ (resp. $\RR(X|\bigcap_{i \in S} Y_i)=\RR(X)$).
%\end{definition}
%
%\begin{theorem}[\cite{ambainis2012quantum}]
%\label{thm:sym-qlll}
%Let $X_1,X_2,\ldots, X_n$ be subspaces, where $\RR(X_i)\geq 1-p$ and such that each subspace is mutually R-independent of a all but $d$ of the others\footnote{For a formal definition of 
%mutual R-independence, see Definition \ref{def:mutual} in section \ref{sec:prop}.
%%AA: I added footnote, because mutual independence is not defined yet 
%}. If $p \cdot e \cdot (d+1) \leq 1$ then  $\RR(\bigcap_{i=1}^n X_i)   > 0$.
%\end{theorem}

\Cor{cor:lll_for_clauses} guarantees the satisfiability of a $\sat$
instance under a certain condition. How difficult is it to find a
satisfying assignment under this condition, in terms of
computational resources? Efficient constructive versions of the LLL
started with the work of Beck~\cite{beck_algorithmic_1991}, which
provided an algorithm that worked under stronger conditions than the
LLL. The results were improved by others, and culminated in the work
by Moser and Tardos~\cite{moser_constructive_2009}(see the
references therein for the complete line of research), which
provided an efficient algorithm under the same conditions as in the
LLL.

Our main result is Algorithm~\ref{alg:compression_algorithm}, an
efficient quantum algorithm for a \emph{commuting} {$\kqsat$}
instance, which satisfies the conditions of the QLLL, i.e.,
\Cor{cor:qsat_satisfiable_clauses}. The $\cqsat$ problem adds the
following requirement to $\qsat$: $[\Pi_{i},\Pi_{j}] = 0$ for all
$i,j$. The \cqsat\ (and more generally, the \clh\ problem) is an
intermediate problem between the classical and quantum regime. On
one hand, all the terms can be diagonalized simultaneously (like the
classical setting); on the other hand, this basis may be highly
entangled, and possesses rich non-classical phenomena (for example,
Kitaev's toric code~\cite{kitaev2002classsical}). Various results
concerning the complexity of this problem appeared
in~\cite{bravyi2005commutative,aharonov2011complexity,schuch2011complexity}.
Our proof for the correctness of the algorithm is based on an
information theoretic argument, which is similar to the one used in
Moser's information theoretic proof (see~\cite{fortnow_kolmogorov}).
\begin{theorem}[Main result]
\label{thm:main} Let $\{\Pi_1,\ldots,\Pi_m\}$ be a ($k$-local, rank
  $r$, $g$-neighborhood)-$\qsat$ instance with $g < \frac{2^k}{re}$.
  Then for every $\delta > 0$, 
  Algorithm~\ref{alg:compression_algorithm} returns ``Success'' with
  probability $\ge 1-\delta$ and a running time of
  $m\cdot\tilde{\mathcal{O}}(\eta)$, where $\eta\EqDef
  \frac{1}{\delta[k-\log(ger)]}$.
 
  If $\forall i,j \leq m \ [\Pi_i,\Pi_j] = 0$ then the output is a
  satisfying state.
\end{theorem}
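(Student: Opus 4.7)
The strategy is to adapt Moser's classical constructive LLL proof~\cite{moser2009constructiveSTOC} to the quantum commuting setting, following the information-theoretic reformulation of~\cite{fortnow_kolmogorov}. The algorithm should follow the classical \textsc{Fix} template: begin from an easily-prepared initial state (such as the maximally mixed state, or a random product state obtained by measuring each qubit in a random basis), scan the projectors once in some fixed order, and for each projector found to be ``violated'' (i.e., whose measurement collapses the state into $\mathrm{range}(\Pi_i)$), invoke a recursive \textsc{Fix} subroutine that resamples the $k$ qubits on which $\Pi_i$ acts and then recursively fixes any neighbor in $\Gamma^{+}(\Pi_i)$ that has become violated.

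The heart of the proof is the information-theoretic bound on the number of \textsc{Fix} calls. Each invocation consumes $k$ fresh random bits from the resampling step, but can be encoded using only $\log g + \log r + O(1)$ bits: $\log g$ bits select which neighbor in $\Gamma^{+}(\Pi_i)$ triggered the recursive descent, $\log r$ bits identify which of the (up to) $r$ vectors in $\mathrm{range}(\Pi_i)$ the preceding measurement produced, and an $O(\log e)$ term accounts for enumerating the shape of the recursion tree (as in the classical proof). Over $T$ \textsc{Fix} calls descending from a single top-level projector, the execution trace thus fits into roughly $T\log(ger)$ bits, whereas the random tape supplies $\approx Tk$ independent bits. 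Since a truly random bit string is incompressible with high probability, a union bound over all possible log strings yields $\pr[T \geq t]\le 2^{-t[k-\log(ger)]}\cdot\poly(m)$. Choosing $t = \tilde{\mathcal{O}}(\eta)$ makes each outer scan step fail with probability at most $\delta/m$, which together with a union bound over the $m$ scan steps gives the claimed $m\cdot\tilde{\mathcal{O}}(\eta)$ running time.

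The satisfiability claim under commutation follows from simultaneous diagonalizability: when $[\Pi_i,\Pi_j]=0$ for all $i,j$, there exists an orthonormal basis in which every $\Pi_i$ is diagonal with $0/1$ entries, and measuring one such projector cannot disturb the $0/1$ outcomes of the others. Consequently, once a complete scan finds every projector to be satisfied, the post-measurement state must lie in $\bigcap_i \ker(\Pi_i)$. I expect the main obstacle to be making the compression argument rigorous in the quantum setting, where the ``random tape'' consists of the outcomes of measurements applied to evolving states rather than of preset classical bits. Commutation is precisely what lets us pass to a single joint eigenbasis and treat those outcomes as classical samples from a fixed distribution, thereby reducing the quantum analysis to the classical one; executing this reduction cleanly --- in particular, correctly accounting for the rank-$r$ generalization, which is the reason rank is tracked separately from locality in the neighborhood condition --- will be the most delicate technical step.
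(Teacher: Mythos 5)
Your high-level strategy — adapt Moser's information-theoretic argument, recursively \textsc{Fix}, account for the entropy injected by resampling vs.\ the entropy of the log string — is the right one, and your bookkeeping ($k$ bits in per resample, $\log(ger)+O(1)$ bits out per recursion event) matches the arithmetic of the paper. The rank-$r$ observation is also correct and is exactly why the paper tracks rank separately.

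However, there is a genuine gap at what you yourself flag as the ``most delicate technical step.'' You propose to make the incompressibility argument rigorous by simultaneously diagonalizing the commuting projectors and then treating the measurement outcomes as a classical random tape in the joint eigenbasis. This reduction is not nearly as clean as stated: the joint eigenbasis of commuting $k$-local projectors is \emph{not} a product basis over qubits (it can be highly entangled, as in the toric code example the paper mentions), so ``replacing $k$ qubits with maximally mixed qubits from a stock register'' is not the same as ``drawing $k$ fresh uniform classical bits'' in that basis. Making the projectors simultaneously diagonal does not diagonalize the resampling channel. There is a nontrivial structure theory (Bravyi--Vyalyi) lurking here, and your sketch simply asserts the reduction rather than carrying it out. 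Moreover, the reduction approach, even if it went through, could only apply under the commutation assumption, whereas the paper proves the running-time and success-probability bound \emph{unconditionally}; commutation is used only to conclude that the final state is actually satisfying.

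The paper sidesteps the entire classical-reduction issue by replacing Kolmogorov-complexity/incompressibility with a direct von Neumann entropy inequality (Claim~\ref{fa:entropy}): $S(\rho_{init}) \le H(\{p_{\bs}\}) + \sum_{\bs} p_{\bs} S(\rho_{\bs})$, where $\bs$ ranges over measurement-outcome histories. This yields a bound on the \emph{expected} number of failures $\sum_t p_t\, t$, which is then converted into a tail bound by Markov-type reasoning on $p_T$. The resulting bound decays only polynomially in $T$, i.e.\ $p_T \le \frac{\log T + m}{T[k-\log(ger)]}$, \emph{not} exponentially as in your sketch. Your claimed $\Pr[T\ge t]\le 2^{-t[k-\log(ger)]}\cdot\poly(m)$ is the classical Moser bound, and deriving it in the quantum setting is precisely the part you haven't justified. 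The final parameter choice $T = \lceil 4m\eta\log(\eta+2)\rceil$ in the paper is tuned for the polynomial tail, and still achieves the advertised $m\cdot\tilde{\mathcal{O}}(\eta)$ running time, so the weaker tail suffices — but you would need to rederive the parameters with whichever bound you actually prove.
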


\begin{algorithm}[!hbtp]
  \caption{Commuting QLLL solver}\label{alg:compression_algorithm}
    
    {~}
    
  	\textbf{Initialize:}
    \begin{algorithmic}[1]
      \State For $\eta\EqDef \frac{1}{\delta[k-\log(ger)]}$, fixed integer
      \begin{align}
        \label{def:T}
          T \EqDef \lceil 4m\eta\cdot \log(\eta+2)\rceil \ .
      \end{align}
      
      \State System register: $n$ qubits, prepared in a fully mixed state.
      \State Stock register: $N=Tk$ qubits in fully mixed state.
    \end{algorithmic}

    {~}
    
  	\textbf{Algorithm:}
    \begin{algorithmic}[1]
    \State $t \gets 0$
    \For{$i \gets 1$ to $m$}
        \State Fix$(\Pi_i)$ \label{alg:fix_all}
    \EndFor
    \State return ``Success''
    
    {~}

    \Procedure{\textnormal{Fix}}{$\Pi$}
      \State measure $\{\Pi, \Id - \Pi\}$; \label{alg:measurement} in 
        the system register \label{alg:measurement_step}
      \If{result $=\Pi$ (projector violated, energy = 1)}
        \State $t\gets t+1$
        \If{$t=T$}
				   \State abort and return ``Failure''
				\EndIf
        \State Replace the $k$ measured qubits with $k$ maximally
          mixed qubits from the stock register
        \ForAll{$\Pi_j \in \Gamma^+(\Pi)$}
          \State Fix$(\Pi_j)$ \label{alg:call_to_fix}
        \EndFor
      \EndIf
    \EndProcedure
  \end{algorithmic}
\end{algorithm}

Very similar variants of the main result were discovered
independently. A talk that described the different approaches was
given in~\Ref{cubitt2011three}. The main open question emerging from
these works is how to find a constructive version for the QLLL in
the non-commuting case. We hope that at least one of these
approaches turns out to be useful for proving the non-commuting
case. We now compare the differences between these approaches. We
believe that the main (and only) advantage of our approach is its
simplicity. The result given in ~\Ref{cubitt2011constructive} holds
in a more general setting which is called the \emph{asymmetric}
QLLL, whereas our version only holds in the so-called
\emph{symmetric} QLLL. The asymmetric LLL (see, e.g.
Ref.~\cite[Lemma 5.1.1]{alon2004probabilistic}) is useful when
there are differences between the upper bounds on the probabilities
of the events. Furthermore, in the non-commuting case, the
termination of the algorithm in \Ref{cubitt2011constructive} implies
that the state has low energy (so, the remaining task to prove the
non-commuting version is to prove a fast termination). On the other
hand, our algorithm terminates also in the non-commuting case, and
the missing part is proving that the state has low-energy. These
results are complementary in that sense. The approach
in~\Ref{cubitt2013quantum} has a much better trade-off between the
running time and the probability of success.
 
Before proving \Thm{thm:main}, let us briefly explain why we need to
work in the general setting with rank-$r$ projectors, and not only
with rank-$1$. It is easy to verify that a ($k$-local, rank
$r$)-$\qsat$ is equivalent to a ($k$-local, rank $1$)-$\qsat$ by
replacing each rank-$r$ projector with $r$ different rank-1
projectors. This is because a rank-$r$ projector can be written as
the sum of $r$ rank-1 projectors. Unfortunately, this transformation
breaks the commutativity property for $\cqsat$; for example, it is possible 
that $[A,B]=0$ for some projectors $A$ and $B$, but for the
decomposition to rank-1 projectors $A = \sum_i A_i$, $B=\sum_j B_j$, $[A_i,B_j] \neq 0$
for some $i$ and $j$.

%%%%%%%%%%%%%%%%%%%%%%%%%%%%%%%%%%%%%%%%%%%%%%%%%%%%%%%%
\section{Analyzing the algorithm -- proof of \Thm{thm:main}}

In the commuting case, if the algorithm succeeds (i.e., returns
``Success''), it produces a satisfying state $\rho$: the set of
satisfied projectors monotonically increases when a $\FIX$ call
returns, and furthermore, after $\FIX(\Pi_{i})$ returns,
$\Tr(\Pi_{i} \rho) = 0$. Because $\FIX$ is called for every
projector (line \ref{alg:fix_all}), if the algorithm succeeds, the
resulting state on the system register satisfies all the
projectors.  

Therefore, in order to prove \Thm{thm:main}, we need to show that
the success probability is at least $1-\delta$. We analyze the
success probability of the algorithm by deriving an inequality that
relates the initial entropy of the system to that of the
different possible branches in the running history of the algorithm.

Initially, the two registers are completely mixed, so the system
is in the state $\rho_{init} =
2^{-(n+N)}\Id_{n+N}$, and
\begin{align*}
  S(\rho_{init}) = n+N \ .
\end{align*}

{~}

To analyze the final state of the system, we note that all possible
sequences of measurements form a history tree. We characterize each
history branch by a binary string that records the result of the
measurements in the branch. 0 means a ``success'' --- a projection
into the zero energy subspace of the projector, and 1 means a
``failure''. We denote such string by $\bs=(s_1, s_2, s_3, \ldots)$,
where $s_i$ denotes the outcome of the $i$th measurement, and let
$|\bs|$ denote the Hamming weight of $\bs$, which is exactly the
number of failures in a particular branch. Denote by $\rho_{\bs}$
the normalized resulting state of such a series of projective
measurements, and $p_{\bs}$ the probability of that branch to occur.

The heart of the analysis is the following simple claim, that
upper-bounds the entropy of the initial state by an expression involving
$\{p_{\bs}\}$ and $\{\rho_{\bs}\}$.

\begin{claim}
\label{fa:entropy} Consider an (adaptive) quantum algorithm that
  applies a series of projective measurements to an initial state
  $\rho_{init}$, and let $O$ denote the set of all possible
  measurements outcomes. For each $s\in O$, let $p_s$ denote the
  probability with which outcome $s$ occurs, and $\rho_s$ the
  resultant final state of the system. Then 
  \begin{align*}
    S(\rho_{init}) \le  H(\{p_{s}\}) 
      + \sum_{s \in O}p_{s}S(\rho_{s}) \ ,  
  \end{align*}
  where $H(\cdot)$ is the Shanon entropy of classical probability
  distributions, and $S(\cdot)$ is the von Neumann entropy.
\end{claim}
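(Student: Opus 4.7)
The plan is to realize the whole adaptive sequence of projective measurements as a coherent isometry on an enlarged Hilbert space, and then read off the inequality from two standard facts: isometries preserve the von Neumann entropy, and dephasing in a fixed basis is a unital CPTP map and therefore entropy-nondecreasing. Concretely, I would introduce a record register $R$ initialized in $\ket{0}$, and implement each measurement $\{P,\Id-P\}$ coherently by the isometry that writes the outcome bit into the next cell of $R$, i.e.\ $\ket{\psi}\otimes\ket{r}\mapsto (P\ket{\psi})\otimes\ket{r0}+((\Id-P)\ket{\psi})\otimes\ket{r1}$. Adaptivity is accommodated by controlling the choice of the next isometry on the contents of $R$; the composition of the resulting controlled isometries is itself an overall isometry $V$ from the system register to system$\,\otimes R$.

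From here the argument has three short steps. First, since $V$ is an isometry, $V\rho_{init}V^\dagger$ has the same nonzero spectrum as $\rho_{init}$, so $S(V\rho_{init}V^\dagger)=S(\rho_{init})$. Second, dephasing $R$ in the computational basis produces the classical-quantum state $\sigma=\sum_{\bs} p_{\bs}\,\rho_{\bs}\otimes\ketbra{\bs}{\bs}$, in which $p_{\bs}$ and $\rho_{\bs}$ are exactly the outcome probability and post-measurement state along branch $\bs$; since dephasing is unital (a convex combination of unitaries), it cannot decrease entropy, so $S(\sigma)\ge S(V\rho_{init}V^\dagger)=S(\rho_{init})$. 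Third, the blocks $p_{\bs}\rho_{\bs}\otimes\ketbra{\bs}{\bs}$ live in mutually orthogonal subspaces of system$\,\otimes R$, so the spectrum of $\sigma$ decomposes as the disjoint union of the spectra of the $p_{\bs}\rho_{\bs}$, which gives $S(\sigma)=H(\{p_{\bs}\})+\sum_{\bs}p_{\bs}S(\rho_{\bs})$ by a direct calculation. Chaining the three facts yields the claim.

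The only mildly subtle point is the adaptive bookkeeping in the first step: one has to verify that ``next measurement may depend on the classical record so far'' is faithfully captured by controlled isometries, so that the overall map really is a single isometry whose Stinespring ancilla is exactly the record $R$. An alternative route, which avoids an explicit dilation, is induction on the number of measurements: the single-step bound $S(\tau)\le S(\sum_s P_s\tau P_s)=H(\{q_s\})+\sum_s q_s S(\tau_s)$ (which follows from pinching being unital plus block-diagonality of the post-measurement state) combines with the chain rule $H(X_1,\dots,X_T)=\sum_t H(X_t\mid X_{<t})$ for the Shannon entropy to yield the tree inequality branch by branch. Both routes are essentially routine; I do not foresee a genuine obstacle.
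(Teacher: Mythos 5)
Your main route is essentially the paper's proof: both defer measurements via a coherent dilation onto a record/ancilla register (your isometry $V$ is exactly the paper's $U(\cdot\otimes\ketbra{0}{0})U^\dagger$), both use that this dilation preserves entropy, that projective measurement/dephasing is entropy-nondecreasing (Nielsen \& Chuang Thm.~11.9, which the paper cites and which is proved by the unital-pinching argument you invoke), and that the entropy of a mixture with orthogonal supports splits as $H(\{p_s\})+\sum_s p_s S(\rho_s)$ (N\&C Thm.~11.8(4)). The packaging differs slightly, but the ideas and key lemmas coincide.
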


\begin{proof}  
  By adding a sufficient amount of ancilla qubits to the system,
  initialized in the pure state $\ket{0}$, we can use standard
  techniques from the theory of quantum computation, and move all
  intermediate measurements to the end of the algorithm, where they
  will be performed on the ancilla qubits. Specifically, if $L$ is
  the number of ancilla qubits we add, then we can assume without
  loss of generality that we start with the state 
  \begin{align}
    \rho'_{init} \EqDef \rho_{init}\otimes \ket{0^{\otimes
        L}}\bra{0^{\otimes L}} \ ,
  \end{align}
  apply to it some unitary circuit $U$, obtain the state
  \begin{align}
    \rho'_1 \EqDef U^\dagger \rho'_{init} U \ ,
  \end{align}
  and in the end perform a
  projective measurement on the ancilla qubits. The measurement is
  defined by the set of orthogonal projectors $\{P_s\}_{s\in O}$,
  such that $\sum_{s\in O} P_s =\Id$, and we are guaranteed that
  \begin{align*}
    p_s  &= \Tr \big(P_s \rho'_1\big) \ , \\
    \rho_s &= \frac{1}{p_s}  
      \Tr_{ancilla} \big(P_s \rho'_1 P_s\big) \ .
  \end{align*}
  
  Let us now define the state $\rho'_{final} \EqDef
  \sum_{s\in O} P_s \rho'_1 P_s$, which is the state of the
  system after we measured the ancillary qubits, and let us define
   $\rho'_s \EqDef \frac{1}{p_s} P_s \rho'_1
  P_s$.
  Clearly, $\rho'_{final} = \sum_s p_s \rho'_s$, and $\rho'_s$ have
  orthogonal supports. We now apply the following 
  two elementary results from the theory of quantum
  information:
  \begin{enumerate}
    \item A projective measurement can only increase the von Neumann
      entropy (see Nielsen \& Chuang Theorem~11.9):
      \begin{align*}
        S(\rho'_1) \le S(\rho'_{final}) \ .
      \end{align*}
     
    \item The entropy of a sum of mixed states with orthogonal support (see
      Nielsen \& Chuang Theorem~11.8(4)):
      \begin{align*}
        S(\sum_s p_s\rho'_s) = H(\{p_s\}) + \sum_s p_s S(\rho'_s) \ .
      \end{align*}
      
  \end{enumerate}
  Combining these two results, together with the easy observation that
  $S(\rho_{init}) = S(\rho'_{init}) = S(\rho'_1)$, we conclude that
  $S(\rho_{init}) \le H(\{p_s\}) + \sum_s p_s S(\rho'_s)$. Finally, $\rho'_s
  = \rho_s \otimes \ket{s}\bra{s}$. Therefore $S(\rho'_s) =
  S(\rho_s)$, and this complete the proof.
  
\end{proof}

We now return to the analysis of the algorithm. Applying 
Claim~\ref{fa:entropy}, we get
\begin{align}
\label{eq:main}
  n  +N = S(\rho_{init}) 
    \le H\big(\{p_{\bs}\}) + \sum_{\bs} p_{\bs} S(\rho_{\bs}) \ .
\end{align}

We define $p_t$ as the probability that
Algorithm~\ref{alg:compression_algorithm} ended with exactly $t$
failures:
\begin{align*}
  p_{t} \EqDef  \sum_{|\bs|=t} p_{\bs}.
\end{align*}

Let us upper bound the RHS of \Eq{eq:main} in terms of the probabilities
$\{p_t\}$.
\begin{description}
  \item [1. $H(\{p_{\bs}\})$:]\  \\
    By definition, $H(\{p_{\bs}\})= H(\{p_{t}\}) + \sum_{t=0}^{T}
    p_{t} H\big(\{p_{\bs} \}\big| \, |\bs| = t\big)$. Since $t$ can
    take values between 0 and $T$ then trivially $H\big(\{ p_t
    \}\big) \le \log T$. 
    
    To upper-bound $H\big(\{p_{\bs}\}\big| |\bs| = t\big)$, we count
    the number of strings with exactly $t$ ``1''. Every such string
    corresponds to a branch with exactly $t$ failures.  The total
    length of the string is the total number of measurements, which
    is the total number of calls to $\FIX$. This is at most $m+gt$
    because we had at most $m$ external calls (we had exactly $m$
    such calls when $t<T$, but for $t=T$ we could have had less) and
    every failure triggers at most $g$ recursive calls. Therefore,
    \begin{align*}
      H\big(\{\hat{p}_{\bs}\} \big| |\bs| = t \big)
      &\le \log\binom{m+gt}{t} \\
      &\le m + \log\binom{gt}{t} \\
      &\le m + \log \left(\frac{egt}{t}\right)^t
        = m +  t\log(ge) \ .
    \end{align*}
    The second inequality is valid for $g\ge 2, t\ge 1$ and $m\ge
    0$, and is proved in \App{app:binom-upperbound}. The second
    inequality follows from the standard bound
    $\binom{n}{k} \le \left(\frac{en}{k}\right)^k$. All together,
    \begin{align*}
      H(\{p_{\bs}\}) \le \log T + \sum_{t=0}^T 
        p_t\cdot\big(m + t\log(ge)\big) \ .
    \end{align*}
    
    \item [$\sum_{\bs} p_{\bs} S(\rho_{\bs})$:] \ \\
      We write, $\sum_{\bs} p_{\bs} S(\rho_{\bs}) = \sum_{t=0}^T
      \sum_{\bs, |\bs|=t} p_{\bs} S(\rho_{\bs})$. Each branch in the
      internal sum has exactly $t$ failures, and each such failure
      is a measurement which puts $k$ qubits in an $r$-dimensional
      subspace (because the $\Pi_i$ are rank-$r$ projectors).
      The entropy of these qubits is therefore
      at most $k-\log r$, and so for such branch, $S(\rho_{\bs})
      \le N + n - t(k - \log r) $. Therefore,
      \begin{align*}
        \sum_{\bs} p_{\bs} S(\rho_{\bs}) 
          &\le \sum_{t=0}^T \sum_{\bs , |\bs|=t} p_{\bs}
          \big(N + n - t(k - \log r ) \big) \\
          &= N + n - (k - \log r )\sum_{t=0}^T p_t\cdot t  \ .
      \end{align*}
  \end{description}

Plugging these bounds into \Eq{eq:main}, we get
\begin{align*}
  n + N \le \log T + N + n + m 
    - [k-\log(ger)]\sum_{t=0}^Tp_{t}\cdot t \ ,
\end{align*}
which implies  $[k-\log(ger)]\sum_{t=0}^Tp_{t}\cdot t \le \log T +
m$. By assumption, $k-\log(ger) > 0$, and therefore,
\begin{align}
\label{eq:failure}
  \Pr(\text{failure}) = p_T \le \frac{\log T + m}{T[k-\log(ger)]} \ .
\end{align}
Recalling that $T \EqDef \lceil 4m\eta\cdot \log(\eta+2)\rceil$,
it is now a simple algebra
to show that when $m\ge 2$, the RHS of the above equation is upper
bounded by $\delta$. See \App{app:RHS} for a full proof. This finishes
the proof of \Thm{thm:main}.

%%%%%%%%%%%%%%%%%%%%%%%%%%%%%%%%%%%%%%%%%%%%%%%%%%%%%%%%%%%%%%%%%
\section*{Acknowledgments}
The authors wish to thank Toby Cubitt, Julia Kempe, Frank Verstraete and especially Martin Schwarz for valuable discussions.

Research at the Centre for Quantum Technologies is funded by the
Singapore Ministry of Education and the National Research
Foundation, also through the Tier 3 Grant ``Random numbers from
quantum processes".

IA and OS received funding from the European Research Council under the European Union's Seventh Framework Programme (FP7/2007-2013) / ERC grant agreement no.\ 280157. OS was supported by Julia Kempe's Individual Research Grant of the Israeli Science Foundation and by Julia Kempe's European Research Council (ERC) Starting Grant.
 
OS would like to thank the Clore foundation for their support.

\appendix

%%%%%%%%%%%%%%%%%%%%%%%%%%%%%%%%%%%%%%%%%%%%%%%%%%%%%%%%%%%%%%%
\section{Proofs}
\label{sec:proofs}

%----------------------------------------------------------
\subsection{An upper bound on $\binom{m+gt}{t}$}
\label{app:binom-upperbound}

Let $g\ge 2, t\ge 1$ be some integers. We would like to show that for every
$m\ge 0$,
\begin{align}
\label{eq:binom-ineq}
  \binom{m+gt}{t} \le 2^m \binom{gt}{t} \ .
\end{align}
The proof is by induction. For $m=0$, the inequality in
\eqref{eq:binom-ineq} is trivially satisfied. Assume then that it is
true for $m>0$, and let us prove its validity for $m+1$. Writing
$\binom{m+gt}{t} = \frac{(M+gt)!}{(m+gt-t)!t!}$, one can easily
verify that
\begin{align*}
  \binom{m+gt}{t} = \frac{m+gt}{m+gt-t}\binom{m-1+gt}{t} \ .
\end{align*}
Therefore, by the induction assumption, $\binom{m+gt}{t}\le
\frac{m+gt}{m+gt-t} 2^{m-1}$. Finally, using the fact that $g\ge 2,
t\ge 1$ and $m\ge 0$, it is easy to see that $\frac{m+gt}{m+gt-t}\le
2$, and therefore $\binom{m+gt}{t}\le 2^m$.

%----------------------------------------------------------
\subsection{Bounding the RHS of \Eq{eq:failure}}
\label{app:RHS}

To show that the RHS of \Eq{eq:failure} is upperbounded by $\delta$,
we need to show that 
\begin{align}
\label{eq:T-ineq}
  \frac{\log T + m}{T} \le \delta[k-\log(ger)] \EqDef \frac{1}{\eta}
  \ .
\end{align}
Since the LHS of the above inequality is a decreasing function of
$T$ for every $T\ge 1$ and $m\ge 2$, we can safely drop the
$\lceil\cdot\rceil$ from the definition of $T$, and prove the
inequality for $T= 4m\eta\cdot\log(\eta+2)$. Substituting this in
\Eq{eq:T-ineq}, we obtain the following equivalent inequality
\begin{align*}
  \frac{\log(4m) + \log\eta + \log\log(\eta+2) + m}{m\log(\eta+2)}
  \le 4 \ .
\end{align*}
It is now straight forward to verify that since $\eta>0$ then as
long as $m\ge 2$, 
\begin{align*}
  \frac{\log(4m)}{m} &\le 2 \ , &
  \frac{\log\eta + \log\log(\eta+2)}{m\log(\eta+2)} &\le 1 \ , &
  \frac{m}{m\log(\eta+2)} &\le 1 \ ,
\end{align*}
and this finishes the proof.

%%%%%%%%%%%%%%%%%%%%%%%%%%%%%%%%%%%%%%%%%%%%%%%%%%%%%%%%%%%%%%%%%%
\bibliographystyle{alpha}

{~}

\bibliography{cqlll}

\end{document}